\newtheorem{theorem}{Theorem}
\newtheorem{lemma}[theorem]{Lemma}
\newtheorem{observation}[theorem]{Observation}
\newcommand{\propXt}{\text{PROPm}}
\newcommand{\spx}{\text{CP}_i}
\newcommand{\spxi}[3]{\text{CP}_{#1}(#2,#3)}
\begin{document}
\title{Achieving Proportionality up to the Maximin Item\\ with Indivisible Goods}
\author[a]{Artem Baklanov} \author[b]{Pranav Garimidi} \author[c]{Vasilis Gkatzelis} \author[c]{Daniel Schoepflin}
\affil[a]{Higher School of Economics, St. Petersburg}
\affil[b]{Conestoga High School}
\affil[c]{Drexel University}
\date{}

\maketitle

\begin{abstract}
We study the problem of fairly allocating indivisible goods and focus on the classic fairness notion of proportionality. The indivisibility of the goods is long known to pose highly non-trivial obstacles to achieving fairness, and a very vibrant line of research has aimed to circumvent them using appropriate notions of approximate fairness. Recent work has established that even approximate versions of proportionality (PROPx) may be impossible to achieve even for small instances, while the best known achievable approximations (PROP1) are much weaker. We introduce the notion of \emph{proportionality up to the maximin item} (\propXt) and show how to reach an allocation satisfying this notion for any instance involving up to five agents with additive valuations. {\propXt} provides a well-motivated middle-ground between PROP1 and PROPx, while also capturing some elements of the well-studied maximin share (MMS) benchmark: another relaxation of proportionality that has attracted a lot of attention.
\end{abstract}

\section{Introduction}

We consider the well-studied problem of fairly distributing a set of scarce resources among a group of $n$ agents. This problem is at the heart of the long literature on fair division, initiated by \citet{Stern}, which has recently received renewed interest, partly due to the proliferation of automated resource allocation processes. To reach a fair outcome, such processes need to take into consideration the preferences of the agents, i.e., how much each agent values each of the resources. The most common modelling assumption regarding these preferences is that they are \emph{additive}: each agent $i$ has a value $v_{ij}\geq 0$ for each resource $j$, and her value for a set $S$ of resources is $v_i(S)=\sum_{j\in S}v_{ij}$. But, what would constitute a ``fair'' outcome given such preferences?


One of the classic notions of fairness is \emph{proportionality}. An outcome satisfies proportionality if the value of every agent for the resources that were allocated to them is at least a $1/n$ fraction of her total value for all of the resources. For the case of additive valuations, if $M$ is the set of all the resources, then every agent $i$ should receive a value of at least $\frac{1}{n}\sum_{j\in M} v_{ij}$. This captures fairness in a very intuitive way: since there are $n$ agents in total, if they were to somehow divide the total value equally among them, then each of them should be receiving at least a $1/n$ fraction of it; in fact, they could potentially all receive more than that if they each value different resources. However, it is well-known that achieving proportionality may be impossible when the resources are \emph{indivisible}, i.e., cannot be divided into smaller parts and shared among the agents. This can be readily verified with the simple example involving only a single indivisible resource and at least two agents competing for it. In this case, whoever is allocated that resource will receive all of her value but all other agents will receive none of it, violating proportionality.

In light of this impossibility to achieve proportionality in the presence of indivisible resources, the literature has turned to relaxations of this property. A natural candidate would be a multiplicative approximation of proportionality, aiming to guarantee that every agent receives at least a $\lambda/n$ fraction of their total value, for some $\lambda \in [0, 1]$. However, the single resource example provided above directly implies that no $\lambda>0$ is small enough to guarantee the existence of such an approximation. As a result, research has instead considered additive approximations, leading to two interesting notions: PROP1 and PROPx. These relaxations allow the value of each agent $i$ to be less than a $1/n$ fraction of her total value but by no more than some additive difference $d_i$. For the case of PROP1, $d_i$ corresponds to the \textit{maximum} value of agent $i$ over all the items that were allocated to some other agent~\cite{CFS17}. For the case of PROPx, $d_i$ corresponds to the \textit{minimum} value of agent $i$ over all the items that were allocated to some other agent~\cite{AMS2020}. On one hand, PROP1 is a bit too forgiving, and is known to be easy to satisfy, while on the other PROPx is too demanding and is not guaranteed to exist even for instances with just three agents.


In a parallel line of work, an alternative relaxation that has received a lot of attention is the \textit{maximin share} (MMS)~\cite{Budish10}. According to this notion, every agent's ``fair share'' is defined as the value that the agent could secure if she could choose any feasible partition of the resources into $n$ bundles, but was then allocated her least preferred bundle among them. It is not hard to verify that this benchmark is weakly smaller than the one imposed by proportionality, yet prior work has shown that this too may be impossible to achieve, even for instances with just three agents.

In this paper, we propose $\propXt$, a new notion that provides a middle-ground between PROP1 and PROPx, while also capturing the ``maximin flavor'' of the MMS benchmark, and we prove that there always exists an allocation satisfying $\propXt$ for any instance involving up to five agents.





\section{Additional Related Work}


The \emph{proportionality up to the most valued item} (PROP1) notion is a relaxation of proportionality that was introduced by \citet{CFS17}, who observed that there always exists a Pareto optimal allocation that satisfies PROP1. \citet{ACIW19} later extended  this notion to settings where the objects being allocated are chores, i.e., the valuations are negative, and very recently \citet{AMS2020} provided a strongly polynomial time algorithm for computing allocations that are Pareto optimal and PROP1 for both goods and chores. On the other extreme, it is known that the notion of \emph{proportionality up to the least valued item} (PROPx) may not be achievable even for small instances with three agents~\cite{Moulin2019,FSsurvey,AMS2020}.

The PROP1 and PROPx notions are analogs of relaxations that have been proposed and studied for another very important notion of fairness: \emph{envy-freeness} (EF). An allocation is said to be envy-free if no agent would prefer to be allocated some other agent's bundle over her own. The example with the single indivisible item discussed in the introduction shows that envy-free outcomes may not exist, which motivated the approximate fairness notions of \emph{envy-freeness up to the most valued item} (EF1)~\cite{Budish10} and \emph{envy-freeness up to the least valued item} (EFx)~\cite{CKMPS19}. These two notions permit each agent $i$ some additive amount of envy toward some other agent $j$, but this is at most $i$'s highest value for an item in $j$'s bundle in EF1 and at most $i$'s lowest value for an item in $j$'s bundle in EFx.

The existence of EF1 allocations was implied by an older, and classic, argument by \citet{Lipton}. \citet{CKMPS19} demonstrated that the allocation maximizing the Nash social welfare (the geometric mean of the agents' valuations) satisfies both EF1 and Pareto optimality. But, computing this allocation is APX-hard~\cite{Lee17}, so \citet{BarmanKV18} went a step further by designing a pseudo-polynomial time algorithm that computes an EF1 and Pareto optimal allocation. On the other hand, the progress on the EFx notion has been much more limited. \citet{Plautt2018} proved that EFx allocations always exist in two-agent instances, even for general valuations beyond additive, and a recent breakthrough by \citet{CGM2020} showed that EFx allocations always exist in all instances with three additive agents. Even though this result applies only to instances with three agents, its proof required a very careful and cumbersome case analysis to show how an EFx allocation can be produced for all possible scenarios. Whether an EFx allocation always exists or not for instances of four or more agents is a major open question in fair division.


The \emph{maximin share} (MMS), originally defined by \citet{Budish10}, is an alternative relaxation of proportionality that uses a ``maximin'' argument to define the minimum amount of utility that each agent ``deserves''. However, similarly to PROPx, an allocation satisfying this notion of fairness may not always exist, even for three-agent instances~\cite{KPW2018}. To circumvent this issue, a vibrant line of work has instead aimed to guarantee that every agent's value is always at least $\lambda$ times their MMS benchmark, for some $\lambda\in [0,1]$. The first result along this direction showed that an allocation guaranteeing an approximation of $\lambda= 2/3$ can be computed in polynomial time \cite{AMNS2017}. Subsequent work by \citet{BarmanM17} and \citet{GMT2018} also provided simpler algorithms achieving the same guarantee. \citet{GhodsiHSSY18} then provided a non-polynomial time algorithm producing an allocation guaranteeing $\lambda = 3/4$ and further developed this into a polynomial-time approximation scheme guaranteeing $\lambda =3/4-\epsilon$. The most recent update in this line of work further improved the existence bound to $3/4 + 1/12n$, while also providing a strongly polynomial time algorithm to compute an allocation guaranteeing the $3/4$ approximation~\cite{Garg2020}.

\section{Our Results}
We propose a relaxation of proportionality which we call \emph{proportionality up to the maximin item} ($\propXt$). Just like PROP1 and PROPx, our notion allows the value of each agent $i$ to be less than a $1/n$ fraction of her total value, but by no more than some additive difference $d_i$ which is a function of agent $i$'s value for items allocated to other agents. Rather than going with the most valued item (like PROP1) or the least valued item (like PROPx), our definition of $d_i$ is equal to $\max_{i'\neq i}\min_{j\in X_{i'}} \{v_{ij}\}$, where $X_{i'}$ is the bundle of items allocated to agent $i'$. In other words, we consider the least valued item (from $i$'s perspective) in each of the other agent's bundles, and we take the highest value among them. It is easy to verify that this notion lies between the two extremes of PROP1 and PROPx, and it also captures the maximin element that is used to define the MMS benchmark. To further motivate this notion, in Section~\ref{sec:observations} we show that multiple other natural alternatives fail to exist, even for a single instance with just three agents.

Our main result is a constructive argument proving the existence of a $\propXt$ allocation for any instance with up to five agents. This is in contrast to the PROPx and MMS notions for which existence fails even for three-agent instances. Similarly to the breakthrough by \citet{CGM2020} proving the existence of EFx allocations for three-agent instances, our proof requires a careful case analysis to reach $\propXt$ allocations for each scenario. 

What significantly complicates the arguments for the existence of allocations that satisfy EFx or $\propXt$ is that, according to these notions, the satisfaction of each agent depends not only on what they are allocated but also on how all the remaining items are distributed among the other agents. This leads to non-trivial interdependence which precludes the use of greedy-like algorithms. To streamline our arguments we introduce a notion of \emph{close-to-proportional} bundles, which allow us to decouple the allocation of one subset of agents from another, and reduce the required case analysis. Although we prove the existence for up to five agent instances, this is not due to a hard limit to our approach, other than the fact that the case analysis becomes more complicated and does not provide much more intuition. In fact, we suspect the $\propXt$ property can be satisfied even for instances with an arbitrary number of agents.

\section{Preliminaries}

\newcommand{\agents}{N}
\newcommand{\items}{M}
\newcommand{\allocationverbose}{(X_1, X_2, \dots X_n)}
\newcommand{\minbundle}[2]{m_{#1}(#2)}
\newcommand{\minbundlei}[1]{\minbundle{i}{#1}}
\newcommand{\maxmin}[2]{d_{#1}(#2)}
\newcommand{\maxmini}{\maxmin{i}{X}}

We study the problem of allocating a set $\items$ of $m$ indivisible items (or goods) to a set of $n$ agents $N=\{1,2,\dots,n\}$. Each agent $i$ has a value $v_{ij}\geq 0$ for each good $j$ and her value for receiving some subset of goods $S \subseteq \items$ is additive, i.e., $v_i(S) = \sum_{j \in S}{v_{ij}}$. For ease of presentation, we normalize the valuations so that $v_{i}(M) = 1$  for all $i\in \agents$. Given a bundle of goods $S \subseteq \items$, we let $\minbundlei{S} =\min_{j\in S}\{v_{ij}\}$ denote the least valuable good for agent $i$ in bundle $S$. 

An allocation $X = \allocationverbose$ is a partition of the goods into bundles 
such that $X_i$ is the bundle allocated to agent $i$.
%
%
Given an allocation $X$, we use $\maxmini = \max_{i' \neq i}\{\minbundlei{X_{i'}}\}$ to denote agent $i$'s value for her \emph{maximin good in $X$}, and we say that an agent $i$ is \emph{$\propXt$ satisfied} by $X$ if $v_i(X_i) + \maxmini \geq 1/n$. In turn, an allocation $X$ is $\propXt$ if every agent is $\propXt$ satisfied by it. 

Given a positive integer $k\leq n$ and a set of goods $S\subseteq M$, the \emph{close-to-proportional} (CP) bundle for agent $i$, denoted $\spxi{i}{k}{S}$, is the most valuable subset of goods $B \subset S$ from agent $i$'s perspective for which $v_{i}(B)\leq \frac{1}{k}v_i(S)$. In other words, if $i$ was one of $k$ agents that need to be allocated the set of goods $S$, then $\spxi{i}{k}{S}$ is the most valuable subset of these goods that agent $i$ could receive without exceeding her proportional share. When there are multiple bundles that satisfy this definition, then we let $\spxi{i}{k}{S}$ be one with the maximum cardinality, breaking ties arbitrarily among them. As we discuss in Section~\ref{sec:conclusion}, finding the $\spx$ sets is computationally hard.

\section{Initial Observations}\label{sec:observations}

Before proving some helpful lemmas regarding the $\propXt$ notion and the $\spx$ sets, we first prove that a list of natural alternatives to $\propXt$ fail to exist, even for a simple instance involving just three agents and seven items. Rather than adding the maximin value, $d_i(X)$, to each agent $i$'s value in $X$, we consider adding other alternative functions of the agent's value for the items she did not receive. For example, the value added could be equal to the mean, 
the median, the mode, or the minimax value of agent $i$ for the items in $M\setminus X_i$. 

Consider an instance with seven items and three agents that are identical (with respect to their valuations). One of the items has a high value of 1–6$\epsilon$ for some arbitrarily small constant $\epsilon>0$, and the remaining six items each have a small value $\epsilon$. For any allocation of the items, it is easy to verify that there always exists an agent who did not receive the high value item and also received at most three of the other items; as a result, that agent’s value is at most 3$\epsilon$. It is easy to verify that this agent would violate approximate proportionality for all of the approximate notions proposed above, i.e., the mean (which would add less than 0.25), the median (which would add $\epsilon$), the mode (which would add $\epsilon$), and the minimax item value (which would add $\epsilon$). 

In general, many alternatives to $\propXt$ suffer from the same type of issue: if we introduce dummy items to an instance, i.e., items of insignificant value, the relaxation that these alternative notions provide relative to the exact proportionality vanishes, making them impossible to satisfy in general. Our $\propXt$ notion provides an interesting and non-trivial benchmark that is not susceptible to this issue.

We now proceed to some initial observations regarding the construction of $\propXt$ allocations and $\spx$ sets. Our first observation provides us with a sufficient condition under which ``locally'' satisfying $\propXt$ can lead to a ``globally'' $\propXt$ allocation. Given an allocation of a subset of items to a subset of agents, we say that this partial allocation is $\propXt$ if the agents involved would be $\propXt$ satisfied if no other agents or items were present.

\begin{observation}\label{obs:disjointAlloc}
Let $N_1, N_2$ be two disjoint sets of agents, let $M_1$ and $M_2 = M \setminus M_1$ be a partition of the items into two sets, and let $X$ be an allocation of the items in $M_1$ to agents in $N_1$ and items in $M_2$ to agents in $N_2$. Then, if some agent $i \in N_1$ is $\propXt$ satisfied with respect to the partial allocation of the items in $M_1$ to the agents in $N_1$, and $v_i(M_1) \geq \frac{|N_1|}{|N_1 + N_2|}$, then $i$ is $\propXt$ satisfied by $X$ regardless of how the items in $M_2$ are allocated to agents in $N_2$.  
\end{observation}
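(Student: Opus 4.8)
The plan is to unpack the three relevant definitions and show that the hypothesis is, after a single monotonicity observation, exactly the inequality we need, with the value condition on $v_i(M_1)$ precisely calibrated to rescale the proportional share. First I would translate the hypothesis that $i$ is $\propXt$ satisfied by the partial allocation into a concrete inequality. In the sub-instance there are only $|N_1|$ agents and the items $M_1$, and agent $i$'s total value there is $v_i(M_1)$, so her proportional target is $\frac{1}{|N_1|}v_i(M_1)$ rather than $\frac1n$; writing $d_i^{(1)}(X)=\max_{i'\in N_1\setminus\{i\}} m_i(X_{i'})$ for the maximin-good term restricted to $N_1$, the hypothesis reads
\[
v_i(X_i) + d_i^{(1)}(X) \;\geq\; \tfrac{1}{|N_1|}\,v_i(M_1).
\]

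The key step is to compare $d_i(X)$ with $d_i^{(1)}(X)$. Since $d_i(X)=\max_{i'\neq i} m_i(X_{i'})$ ranges over all other agents $i'\in (N_1\cup N_2)\setminus\{i\}$, which is a superset of $N_1\setminus\{i\}$, taking the maximum over the larger index set can only weakly increase the value, so $d_i(X)\geq d_i^{(1)}(X)$ no matter how $M_2$ is distributed among $N_2$. This monotonicity is exactly what produces the ``regardless of how the items in $M_2$ are allocated'' clause, and it is the conceptual heart of the observation even though it is short: adding more bundles never removes a candidate maximin good for agent $i$.

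Finally I would chain the inequalities and invoke the value condition. Combining the two displays above with $v_i(M_1)\geq \frac{|N_1|}{|N_1|+|N_2|}$ and $n=|N_1|+|N_2|$ gives
\[
v_i(X_i)+d_i(X) \;\geq\; v_i(X_i)+d_i^{(1)}(X) \;\geq\; \tfrac{1}{|N_1|}\,v_i(M_1) \;\geq\; \tfrac{1}{|N_1|}\cdot\tfrac{|N_1|}{|N_1|+|N_2|} \;=\; \tfrac1n,
\]
which is precisely $\propXt$ satisfaction of $i$ by $X$. I do not expect a genuine obstacle here: the only care needed is (a) correctly reading the sub-instance target as $\frac{1}{|N_1|}v_i(M_1)$ and (b) verifying the maximin-good term is monotone when passing from $N_1$ to $N_1\cup N_2$; the value hypothesis is exactly strong enough to absorb the change of the per-agent share from $\frac{1}{|N_1|}$ down to $\frac1n$. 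The degenerate case $|N_1|=1$ (where the restricted maximin term is vacuous) is handled directly, since then $i$ receives all of $M_1$ and $v_i(X_i)=v_i(M_1)\geq \frac1n$.
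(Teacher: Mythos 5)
Your proof is correct and follows essentially the same route as the paper's: translate the partial-allocation hypothesis into $v_i(X_i)+d_i^{(1)}(X)\geq \frac{1}{|N_1|}v_i(M_1)$, note that $d_i(X)\geq d_i^{(1)}(X)$ by monotonicity of the maximum over the larger agent set, and chain with the value condition to reach $\frac{1}{n}$. Your explicit treatment of the degenerate case $|N_1|=1$ is a small addition the paper omits, but the argument is otherwise identical.
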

\begin{proof}
This follows from the definition of $\propXt$.  
For all $i \in N_1$ we have $d_i(X) \geq \text{max}_{i' \in N_1\setminus\{i\}}\{\text{min}_{j \in X_{i'}}\{v_{ij}\}\}$.  Then, if $v_i(X_i) + \text{max}_{i' \in N_1\setminus\{i\}}\{\text{min}_{j \in X_{i'}}\{v_{ij}\}\} \geq \frac{v_i({M_1})}{|N_1|}$ (i.e., $i$ is $\propXt$ satisfied by $X$ with respect to the agents in $N_1$ and items in $M_1$) and $v_i(M_1) \geq \frac{|N_1|}{|N_1 + N_2|}$, it must be that $v_i(X_i) + d_i(X) \geq \frac{1}{n}$ so $i$ is also $\propXt$ satisfied by $X$ in the overall allocation of the items in $M$ to $N_1 \cup N_2$.
\end{proof}

We now observe that we may, without loss of generality, assume that  $v_{ij} \leq 1/n$ for every agent $i$ and item $j$.  

\begin{lemma}
If there exists some agent $i \in \agents$ and item $j \in \items$ such that $v_{ij} > 1/n$, we may allocate item $j$ to agent $i$ and reduce the problem to finding a $\propXt$ partial allocation of the items in $\items \setminus \{j\}$ to agents in $N\setminus\{i\}$.
\end{lemma}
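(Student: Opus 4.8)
The plan is to show that, after committing item $j$ to agent $i$, any $\propXt$ partial allocation of $M \setminus \{j\}$ among $N \setminus \{i\}$ extends to a $\propXt$ allocation of the whole instance. So I would take such a reduced allocation as given and verify the global $\propXt$ condition agent by agent.

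Agent $i$ is the easy case: in the final allocation her bundle contains $j$, so $v_i(X_i) \geq v_{ij} > 1/n$, and since $d_i(X) \geq 0$ this gives $v_i(X_i) + d_i(X) > 1/n$ regardless of how $M \setminus \{j\}$ is split. The work is in the remaining agents. Fix $k \neq i$, and let $d_k^{-}(X)$ denote $k$'s maximin value computed only over the agents of $N \setminus \{i,k\}$ — this is the quantity the reduced allocation controls. Since agent $i$'s bundle is exactly $\{j\}$, we have $\min_{j' \in X_i} v_{kj'} = v_{kj}$, so $d_k(X) = \max\{d_k^{-}(X), v_{kj}\}$; in particular I obtain the two lower bounds $d_k(X) \geq d_k^{-}(X)$ and $d_k(X) \geq v_{kj}$.

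Because $k$ is $\propXt$ satisfied in the reduced instance (which has $n-1$ agents and where $k$'s total value is $v_k(M \setminus \{j\}) = 1 - v_{kj}$), the first bound yields $v_k(X_k) + d_k(X) \geq v_k(X_k) + d_k^{-}(X) \geq \frac{1 - v_{kj}}{n-1}$, while the second gives $v_k(X_k) + d_k(X) \geq v_{kj}$. It then suffices to verify the elementary inequality $\max\{v_{kj}, \frac{1-v_{kj}}{n-1}\} \geq 1/n$: the two expressions are respectively increasing and decreasing in $v_{kj}$ and both equal $1/n$ at $v_{kj} = 1/n$, so their maximum is always at least $1/n$. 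This gives $v_k(X_k) + d_k(X) \geq 1/n$, so $k$ is $\propXt$ satisfied, completing the extension.

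The step I expect to be the crux is recognizing why the renormalized proportional share is not enough on its own. For an agent $k \neq i$ it may well be that $v_{kj} > 1/n$ too, in which case $\frac{1-v_{kj}}{n-1} < 1/n$ and the reduced $\propXt$ guarantee alone falls short. The resolution is the two-pronged bound: precisely when $v_{kj}$ is large enough to weaken the proportional-share estimate, the maximin term rescues the inequality, since item $j$ sits alone in $i$'s bundle and therefore contributes its entire value $v_{kj}$ to $k$'s maximin good.
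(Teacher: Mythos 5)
Your proof is correct and follows essentially the same route as the paper's: agent $i$ is trivially satisfied, and for each other agent $k$ the singleton bundle $\{j\}$ contributes $v_{kj}$ to the maximin term while the reduced instance supplies the renormalized bound $\frac{1-v_{kj}}{n-1}$. The paper merely phrases your $\max\{v_{kj},\tfrac{1-v_{kj}}{n-1}\}\geq 1/n$ observation as an explicit case split at $v_{kj}=1/n$ (invoking Observation~\ref{obs:disjointAlloc} for the small-value case), which is exactly your crossing-point analysis.
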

\begin{proof}
 Let $X$ be an allocation which gives $j$ to agent $i$ and is a $\propXt$ allocation with respect to items in $\items \setminus \{j\}$ and agents in $N\setminus\{i\}$. Observe that agent $i$ is clearly $\propXt$ satisfied by $X$ (she is, in fact, proportionally satisfied).  If any other agent $i' \neq i$ also has value $v_{i'j} > 1/n$ for this item, then $d_{i'}(X)\geq 1/n$ (since $j$ is the only item in $X_i$). This implies that $i'$ is $\propXt$ satisfied since $v_{i'}(X_{i'}) + d_{i'}(X) \geq d_{i'}(X) \geq 1/n$.  Finally, all remaining agents $i'\neq i$ have $v_{i'j} \leq 1/n$ implying that $v_{i'}(\items \setminus \{j\}) \geq \frac{n-1}{n}$ and since $i'$ is $\propXt$ satisfied by $X$ with respect to the items in $\items \setminus \{j\}$ she must be $\propXt$ satisfied with respect to the entire allocation by Observation \ref{obs:disjointAlloc} substituting $N_1 = N \setminus \{i\}$ and $M_1 = M \setminus \{j\}$.
\end{proof}


Our next observation provides some initial intuition regarding why $\spx$ sets play a central role in this paper.

\begin{observation}\label{obs:spxSatisfies}
If agent $i$ is allocated her $\spxi{i}{n}{\items}$ set, then $i$ is guaranteed to be $\propXt$ satisfied  regardless of how the other items are allocated. 
\end{observation}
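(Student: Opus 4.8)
The plan is to lean on the defining property of the close-to-proportional set together with its maximum-cardinality tie-breaking rule. Write $X_i = \spxi{i}{n}{\items}$. Since valuations are normalized so that $v_i(\items)=1$, by definition $X_i$ is a most valuable subset of $\items$ whose value does not exceed $\frac{1}{n}v_i(\items)=\frac{1}{n}$; in particular $v_i(X_i)\le 1/n$. I would first isolate the key structural fact that \emph{every} item left outside $X_i$ is individually too valuable to have been added to $X_i$, namely that for each $j\in\items\setminus X_i$ we have $v_i(X_i)+v_{ij}>1/n$.

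To establish this claim I would argue by contradiction, assuming some $j\notin X_i$ satisfies $v_i(X_i)+v_{ij}\le 1/n$. Then $X_i\cup\{j\}$ is itself a feasible candidate for the CP set, since its value remains within the $1/n$ threshold. I would then split into two cases: if $v_{ij}>0$, the set $X_i\cup\{j\}$ is strictly more valuable than $X_i$, contradicting that $X_i$ is a most valuable such subset; and if $v_{ij}=0$, the two sets have equal value but $X_i\cup\{j\}$ has strictly larger cardinality, contradicting the maximum-cardinality tie-breaking rule. Since both cases are impossible, the claim follows. This joint handling of the value-maximality and the cardinality tie-break is the one genuinely delicate point of the argument; everything else is bookkeeping.

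With the claim in hand, I would finish as follows. Because $v_i(X_i)\le 1/n < 1 = v_i(\items)$, the leftover set $\items\setminus X_i$ has positive value and is therefore nonempty, so among the remaining $n-1$ agents at least one agent $i'\neq i$ receives a nonempty bundle $X_{i'}\subseteq\items\setminus X_i$. Every item of $X_{i'}$ lies outside $X_i$, so by the claim each has value exceeding $1/n - v_i(X_i)$, and hence $\minbundlei{X_{i'}}=\min_{j\in X_{i'}}v_{ij}>1/n-v_i(X_i)$. Since $\maxmini\ge\minbundlei{X_{i'}}$ by definition, we obtain $v_i(X_i)+\maxmini>1/n$, which is precisely the condition for $i$ to be $\propXt$ satisfied. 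As the argument never used how the items of $\items\setminus X_i$ were partitioned among the other agents — only that at least one of them is allocated something — the conclusion holds regardless of the allocation of the remaining items, as required.
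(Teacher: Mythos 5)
Your proof is correct and follows essentially the same route as the paper's: the paper simply asserts ``by definition $v_i(S) + \min_{j \in M \setminus S} v_{ij} \geq 1/n$'' and concludes, while you supply the exchange argument (using both value-maximality and the cardinality tie-break) that justifies this inequality, plus the minor nonemptiness check. No gaps; you have just made explicit what the paper leaves implicit.
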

\begin{proof}
Let $S$ be the $\spxi{i}{n}{\items}$ set of agent $i$ and consider an arbitrary allocation of $M \setminus S$ among the remaining $n-1$ agents.  By definition $v_i(S) + \min_{j \in M \setminus S}{v_{ij}} \geq 1/n$, so it must be that if $i$ is allocated $S$, she is $\propXt$ satisfied.
\end{proof}

We now provide a ``recursive'' construction of $\spxi{i}{k}{S}$ sets which gives us even stronger guarantees. Suppose we ask some agent $i$ to first define the bundle $S_n= \spxi{i}{n}{\items}$, then the bundle $S_{n-1} = \spxi{i}{n-1}{\items \setminus S_n}$, then the bundle $S_{n-2} = \spxi{i}{n-2}{\items \setminus (S_n \cup S_{n-1})}$, and so on.  We show that as long as $i$ receives one of these bundles, then we have some flexibility over how to allocate the remaining items.

\begin{theorem}\label{thm:recSpx}
Let $S_n, S_{n-1}, \dots, S_1$ be the recursively defined $\spx$ sets for some agent $i$, as above. If this agent receives any bundle $S_{\ell}$ and no item from $S_n \cup S_{n-1} \cup \cdots \cup S_{\ell + 1}$ is allocated to the same agent as an item from $S_{\ell - 1} \cup S_{\ell - 2} \cup \cdots \cup S_1$, then agent $i$ will be $\propXt$ satisfied.
\end{theorem}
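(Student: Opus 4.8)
The plan is to prove the contrapositive-free, direct statement that $v_i(S_\ell) + \maxmin{i}{X} \geq 1/n$. Write $R_k = S_k \cup S_{k-1} \cup \cdots \cup S_1$ for the pool from which $S_k$ is carved, so that $R_n = M$, $R_{k-1} = R_k \setminus S_k$, and $S_k = \spxi{i}{k}{R_k}$. Partition the non-$S_\ell$ items into the \emph{high} group $H = S_n \cup \cdots \cup S_{\ell+1}$ and the \emph{low} group $L = S_{\ell-1} \cup \cdots \cup S_1 = R_\ell \setminus S_\ell$; the separation hypothesis says precisely that no agent holds items from both $H$ and $L$. The overall strategy is to locate an agent whose entire bundle lies inside $L$, since such an agent's minimum item value is at least $\minbundlei{L}$, and to show that $v_i(S_\ell) + \minbundlei{L}$ already exceeds $1/n$.

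The first step is a value bound on the residual pools: I would show $v_i(R_\ell) \geq \ell/n$. Since $S_k$ is a close-to-proportional bundle of $R_k$, it satisfies $v_i(S_k) \leq \frac{1}{k}v_i(R_k)$ by definition, and because $R_{k-1} = R_k \setminus S_k$ we get $v_i(R_{k-1}) \geq \frac{k-1}{k}v_i(R_k)$. Starting from $v_i(R_n) = v_i(M) = 1$ and telescoping the product $\prod_{k=\ell+1}^{n}\frac{k-1}{k} = \frac{\ell}{n}$ gives the claimed bound.

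The second step exploits the maximality built into the $\spx$ definition. Let $j^\star$ be a least-valued item of $L$ from $i$'s perspective, so $v_{ij^\star} = \minbundlei{L}$. Since $S_\ell$ is the most valuable subset of $R_\ell$ of value at most $\frac{1}{\ell}v_i(R_\ell)$ (chosen of maximum cardinality among ties), adjoining $j^\star$ must break the threshold, i.e. $v_i(S_\ell) + v_{ij^\star} > \frac{1}{\ell}v_i(R_\ell)$; here the cardinality tie-break is what forces the inequality to be strict even if $v_{ij^\star}=0$. Combining with step one yields $v_i(S_\ell) + \minbundlei{L} > \frac{1}{\ell}\cdot\frac{\ell}{n} = \frac{1}{n}$. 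For the final step I would split on whether $L$ is empty. If $\ell = 1$ then $L = \emptyset$ and $S_1 = R_1$, so $v_i(S_1) = v_i(R_1) \geq 1/n$ and $i$ is already proportionally, hence $\propXt$, satisfied. If $\ell > 1$, then $v_i(S_\ell) \leq \frac{1}{\ell}v_i(R_\ell) < v_i(R_\ell)$ forces $L \neq \emptyset$, so its items are allocated to agents other than $i$; picking any agent $i'$ holding an item of $L$, the separation hypothesis forbids $i'$ from holding any item of $H$, and as $S_\ell$ went to $i$ this gives $X_{i'} \subseteq L$. Hence $\minbundlei{X_{i'}} \geq \minbundlei{L}$, so $\maxmin{i}{X} \geq \minbundlei{L}$ and $v_i(S_\ell) + \maxmin{i}{X} > 1/n$.

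The main obstacle is step three: extracting a ``pure-$L$'' agent from the separation hypothesis and verifying that $L$ is genuinely non-empty and actually allocated when $\ell > 1$, together with the clean handling of the $\ell = 1$ boundary. The value estimates themselves are routine once the telescoping identity and the strict maximality inequality (including the tie-breaking subtlety) are in place; the conceptual content is entirely in how the separation condition is used to guarantee that some other agent receives a bundle confined to the low groups.
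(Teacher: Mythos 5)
Your proof is correct and follows essentially the same route as the paper's: the telescoping bound $v_i(R_\ell)\geq \ell/n$, the maximality of the CP set yielding $v_i(S_\ell)+\minbundlei{L}\geq 1/n$, and the separation hypothesis producing an agent whose bundle lies entirely in $L$ so that $\maxmin{i}{X}\geq \minbundlei{L}$. If anything, your version is slightly more careful than the paper's, since you explicitly take the minimum over $L$ rather than over all of $R_\ell$ and you handle the $\ell=1$ (empty $L$) boundary case, which the paper glosses over.
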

\begin{proof}
For all $k \in [n]$, we have $v_i(S_k) \leq \frac{1}{k}v_i(M \setminus (S_{n} \cup S_{n-1} \cup \dots \cup S_{k+1}))$ by definition of $S_k$.  Applying this upper bound on $v_i(S_k)$ for $k = n$, because $v_i(M) = 1$ we have that $v_i(M\setminus S_n) \geq 1 - \frac{1}{n} = \frac{n-1}{n}$.  By  applying the upper bound on $v_i(S_k)$ for $k = n-1$ and our lower bound on $v_i(M \setminus S_n)$ we get $v_i(M \setminus (S_n \cup S_{n-1})) \geq \frac{n-1}{n} - \frac{1}{n-1} \cdot \frac{n-1}{n} \geq \frac{n-2}{n}$.  Iteratively repeating this process, we obtain that for all $k \in [n]$ we know that $v_i(M \setminus (S_{n} \cup S_{n-1} \cup \dots \cup S_{k})) \geq \frac{k-1}{n}$.  Also by definition, each $S_{k}$ is a  $\spxi{i}{k}{M \setminus (S_{n} \cup S_{n-1} \cup \dots \cup S_{k+1})}$ set  for  $M \setminus (S_{n} \cup S_{n-1} \cup \dots \cup S_{k+1})$, so we have that $v_i(S_\ell) + \text{min}_{j \in M \setminus (S_{n} \cup S_{n-1} \cup \dots \cup S_{\ell+1})}\{v_{ij}\} \geq \frac{1}{\ell} \cdot v_i(M \setminus (S_{n} \cup S_{n-1} \cup \dots \cup S_{\ell+1})) \geq \frac{1}{\ell} \cdot \frac{\ell}{n} = \frac{1}{n}$.  But finally, as long as the items from $S_{n} \cup S_{n-1} \cup \dots \cup S_{\ell+1}$ are not included in any of the bundles containing the items in $M \setminus (S_{n} \cup S_{n-1} \cup \dots \cup S_{\ell})$ in the complete allocation $X$, we have that $d_i(X) \geq \text{min}_{j \in M \setminus (S_{n} \cup S_{n-1} \cup \dots \cup S_{\ell})}\{v_{ij}\}$ so $i$ is $\propXt$ satisfied when allocated set $S_{\ell}$.
\end{proof}

\section{PROPm Allocations for 4-Agent Instances}

In this section, we demonstrate that $\propXt$ allocations can be found for any instance with 4 agents.  The construction of the allocation proceeds by finding an appropriate initial partition of the items into bundles (based on our notion of $\spx$ bundles) for some arbitrary agent $i$.  Given these bundles, we then show that we have enough freedom in reallocating items to $\propXt$ satisfy each agent.  We note that our proof is constructive, but finding the initial bundles is computationally demanding (as determining if there is some $\spxi{i}{n}{M}$ set with value $1/n$ is an instance of subset sum).

Whenever we say that a set of two or three agents \emph{split} a bundle $\tilde{M}$, we mean that we find a $\propXt$ allocation of the items in $\tilde{M}$ for these agents. Note that \citet{CGM2020} show how to compute EFx allocations for up to three agent instances, and it is easy to verify that EFx outcomes that allocated all the items are also $\propXt$. But, since the arguments for these results are quite complicated and require additional machinery, for completeness in the full version of the paper we provide much simpler arguments for reaching $\propXt$ outcomes with up to three agents using only tools defined herein.

\begin{theorem}
In every instance involving 4 agents with additive valuations there always exists a $\propXt$ allocation.
\end{theorem}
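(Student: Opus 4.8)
The plan is to reduce the four-agent instance to a collection of two- and three-agent subproblems solvable with the ``split'' primitive, using agent~$1$'s recursively defined $\spx$ bundles as the scaffolding. First, by the preceding Lemma I assume $v_{ij}\le 1/4$ for every agent $i$ and item $j$, so that no single item can by itself carry a large share of any agent's value. I then fix an arbitrary agent, say agent~$1$, and compute her recursive bundles $S_4=\spxi{1}{4}{M}$, $S_3=\spxi{1}{3}{M\setminus S_4}$, $S_2$, and $S_1$, which partition $M$. From the computation inside Theorem~\ref{thm:recSpx} I record the guarantees $v_1(M\setminus(S_4\cup\cdots\cup S_k))\ge (k-1)/4$; in particular $v_1(S_1)\ge 1/4$, $v_1(S_2\cup S_1)\ge 1/2$, and $v_1(S_3\cup S_2\cup S_1)\ge 3/4$.

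The core scheme is to hand agent~$1$ exactly one bundle $S_\ell$ and distribute the rest among agents $\{2,3,4\}$ so that (i) no single bundle ever mixes an item of the ``high side'' $S_4\cup\cdots\cup S_{\ell+1}$ with an item of the ``low side'' $S_{\ell-1}\cup\cdots\cup S_1$, and (ii) each side is split $\propXt$ly among the agents assigned to it. Condition (i) lets Theorem~\ref{thm:recSpx} certify that agent~$1$ is $\propXt$ satisfied, while condition (ii) is achievable because each side is split among at most three agents, for which a $\propXt$ split is known to exist. To lift these \emph{local} guarantees to the full instance I invoke Observation~\ref{obs:disjointAlloc}: if a group $N'\subseteq\{2,3,4\}$ of size $a$ splits a side $\tilde M$ and every member values it at $v_j(\tilde M)\ge a/4$, then every member is $\propXt$ satisfied globally. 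Thus the design reduces to choosing $\ell$ and partitioning $\{2,3,4\}$ across the two sides so that these thresholds hold.

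I expect the main obstacle to be precisely that I cannot control how agents $2,3,4$ value the bundles carved out for agent~$1$: the $\spx$ sets are tailored to agent~$1$'s valuation, and another agent may value agent~$1$'s intended bundle, or one entire side, far more than the threshold $a/4$ permits. The proof therefore becomes a case analysis on these cross-valuations, organized by the choice of $\ell\in\{1,2,3,4\}$ and the induced split sizes of the high and low sides. The favorable cases are direct: taking $\ell=4$ makes the high side empty and asks $\{2,3,4\}$ to split $M\setminus S_4$ with each valuing it $\ge 3/4$, and since $d_j(X)$ also sees agent~$1$'s bundle $S_4$, a plain three-agent split already succeeds whenever $v_j(S_4)\le 1/4$ for all $j$.

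The remaining work is to handle the cases where some agent over-values a bundle destined for agent~$1$ (or a side). The reductions here are to \emph{reroute}: give the over-valued bundle to an agent who values it highly (which, because every item is worth at most $1/4$, forces such a bundle to contain several items and hence still admits a clean split of the rest), or to switch to a different $\ell$ so that the over-valued items fall on a side handled by a larger agent group, relaxing its threshold. The crux is to show that for every configuration of cross-valuations at least one choice of $\ell$ together with a suitable reroute simultaneously meets the Observation~\ref{obs:disjointAlloc} thresholds on both sides while preserving the high/low separation required by Theorem~\ref{thm:recSpx}; verifying this exhaustively across the four values of $\ell$ is where the careful bookkeeping lives, with the two- and three-agent split used as the base case throughout.
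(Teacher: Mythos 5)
Your setup is exactly the paper's: the recursive $\spx$ bundles for agent~$1$, the lower bounds $v_1(M\setminus(S_4\cup\cdots\cup S_k))\ge (k-1)/4$, and the plan of giving agent~$1$ one bundle $S_\ell$ while splitting a ``high side'' and a ``low side'' among $\{2,3,4\}$ so that Theorem~\ref{thm:recSpx} covers agent~$1$ and Observation~\ref{obs:disjointAlloc} covers the rest. But the proposal stops at the point where the proof actually begins. You write that ``the crux is to show that for every configuration of cross-valuations at least one choice of $\ell$ together with a suitable reroute simultaneously meets the \dots thresholds,'' and that verifying this ``is where the careful bookkeeping lives'' --- that verification is the theorem, and it is not carried out. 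Asserting that some combination of $\ell$ and rerouting always works is precisely the claim that needs proof; nothing in your argument rules out a configuration of cross-valuations for which every choice fails.

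What is missing is the organizing quantity that makes the case analysis close. Writing $A=S_2$, $B=S_3$, $C=S_4$, $D=S_1$, the paper keys everything to whether each agent $i\in\{2,3,4\}$ has $v_i(A\cup D)\ge 1/2$ or, complementarily, $v_i(B\cup C)\ge 1/2$, together with the fact (Lemma~\ref{lem:metabundles4}) that agent~$1$ always has $v_1(A\cup D)\ge 1/2$. This yields exactly four cases (zero through three agents above the threshold), and each resolves with a short, specific move. Two of these moves are not instances of your generic ``reroute'' template and would need to be found: when no agent values $A\cup D$ at $1/2$ but some agent $i$ has $v_i(D)\ge 1/4$, one gives $D$ to $i$, $A$ to agent~$1$, and splits $B\cup C$ between the remaining two (who both value it at least $1/2$ because they undervalue $A\cup D$); and when all three agents value $A\cup D$ at $1/2$ but none values $B$ or $C$ at $1/4$, one gives $C$ to agent~$1$ (Observation~\ref{obs:spxSatisfies}) and splits $M\setminus C$ three ways, using $v_i(C)<1/4$ to get the $3/4$ threshold. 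Without identifying the $1/2$-threshold on $A\cup D$ as the case variable and exhibiting these allocations, the argument is a plausible plan rather than a proof. (Minor additional point: your appeal to $v_{ij}\le 1/4$ to argue that an over-valued bundle ``contains several items and hence still admits a clean split'' is not a step the paper needs and does not obviously yield anything --- a bundle with many items need not split well for the agents who must share its complement.)
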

\begin{proof}
We index the agents arbitrarily and begin by recursively constructing $\spx$ sets from the perspective of agent $1$.  We construct 4 bundles of items $A, B, C, D$ as follows:
\begin{itemize}
    \item $C = \spxi{1}{4}{\items}$
    \item $B = \spxi{1}{3}{\items \setminus C}$
    \item $A = \spxi{1}{2}{\items \setminus (C \cup B)}$
    \item $D = \spxi{1}{1}{\items \setminus (C \cup B \cup A)} = M \setminus (A \cup B \cup C)$
\end{itemize}

By Observation \ref{obs:spxSatisfies}, we know that if agent $1$ is allocated bundle $C$, she satisfies $\propXt$.  However, we can also observe that she would be satisfied if she is allocated bundle $D$ because  $v_1(D) \geq 1/4$ (which follows by the repeated application of the definition of $\spx$ sets as in Theorem \ref{thm:recSpx}).

We next want to find bounds on the total value of items in some bundles for agent $1$.  This will allow us to recursively divide the problem into instances with a smaller number of agents.

\begin{lemma}\label{lem:metabundles4}
With agent $1$ and sets $A,B,C,D$ as defined above, $v_1(A \cup D) \geq 1/2$
\end{lemma}

\begin{proof}
By the definition of an $\spx$ set, we have initial upper bounds on the total value agent $1$ has for the generated sets.
\begin{itemize}
    \item $v_1(C) \leq 1/4$
    \item $v_1(B) \leq 1/3(1 - v_1(C))$
    \item $v_1(A) \leq 1/2(1 - v_1(C) - v_1(B))$
\end{itemize}

By combining these upper bounds, we may obtain lower bounds on $v_1(A \cup D)$ as follows 
\begin{align*}
    v_1(A \cup D) &= 1 - (v_1(B) + v_1(C))\\
    &\geq 1 - (1/3 + 2/3(v_1(C)))\\
    & \geq 1 - (1/3 + 1/6 )\\
    & \geq 1/2 ~~~ \qedhere
\end{align*}
\end{proof}



From here we proceed with case analysis based on the value other agents have for $A\cup D$. We present each case as a separate lemma for ease of presentation.

\begin{lemma}
If no agents in $\{2,3,4\}$ have value weakly greater than $1/2$ for the items in $A \cup D$ we can construct an allocation satisfying $\propXt$.
\end{lemma}


\begin{proof}
If there is no agent $i\in \{2,3,4\}$ for which $v_{i}(D) \ge \frac{1}{4}$ then we can give $D$ to agent 1 and split the remaining items between the remaining three agents to produce a $\propXt$ allocation by Observation \ref{obs:disjointAlloc}. Otherwise there must be some agent $i\neq 1$ where $v_{i}(D)\ge \frac{1}{4}$. Then we can give $D$ to agent $i$, give $A$ to agent 1 and split $B\cup C$ between the remaining two agents to arrive at a $\propXt$ allocation by Observation \ref{obs:disjointAlloc} (since for any agent $k$ if $v_{k}(A\cup D)<\frac{1}{2}$, then $v_{k}(B\cup C)\ge \frac{1}{2}$) and Theorem \ref{thm:recSpx}. 
\end{proof}

\begin{lemma}
If one agent in $\{2,3,4\}$ has value weakly greater than $1/2$ for the items in $A \cup D$ we can construct an allocation satisfying $\propXt$.
\end{lemma}

\begin{proof}
Without loss of generality let this be agent 2. Split $A \cup D$ between agents 1 and 2 and split $B \cup C$ between agents 3 and 4 to generate a $\propXt$ allocation by Observation \ref{obs:disjointAlloc}. 
\end{proof}

\begin{lemma}
If exactly two agents in $\{2,3,4\}$ have value weakly greater than $1/2$ for the items in $A \cup D$ we can construct an allocation satisfying $\propXt$.
\end{lemma}


\begin{proof}
Without loss of generality, let agent 2 be the agent who has $v_2(A \cup D) < 1/2$.  For agent 2 it must be that $v_{2}(B)>\frac{1}{4}$ or $v_{2}(C)>\frac{1}{4}$ since $v_{2}(B \cup C)\ge \frac{1}{2}$. But then, we can split $A \cup D$ between the agents 3 and 4, give agent 2 her favorite bundle among $B$ and $C$ and give agent 1 the remaining bundle to arrive at a $\propXt$ allocation by Observation \ref{obs:disjointAlloc} and Theorem \ref{thm:recSpx}. 
\end{proof}

\begin{lemma}
If all three agents in $\{2,3,4\}$ have value weakly greater than $1/2$ for the items in $A \cup D$ we can construct an allocation satisfying $\propXt$.
\end{lemma}

\begin{proof}
If for one of the agents $i \in \{2,3,4\}$ we have that either $v_{i}(B) \ge \frac{1}{4}$ or $v_{i}(C) \ge \frac{1}{4}$ then the allocation follows the same from the previous lemma. Otherwise, we have that all three agents $i\neq 1$ have $v_{i}(C)< \frac{1}{4}$ and we can give $C$ to agent $1$ who is $\propXt$ satisfied by Observation \ref{obs:spxSatisfies} and split the remaining items between the remaining agents which yields a $\propXt$ allocation by Observation \ref{obs:disjointAlloc}. 
\end{proof}

Since in each case, we have demonstrated how one may construct a $\propXt$ allocation, for any set of four agents with additive valuations, a $\propXt$ allocation exists.
\end{proof}

\section{PROPm Allocations for 5-Agent Instances}
In this section, we demonstrate that $\propXt$ allocations can be found for any instance with 5 agents.  The proof proceeds similarly to the four agent case but requires a closer analysis of various cases.  As above, whenever we say that a set of fewer than five agents ``split'' a bundle $\tilde{M}$, we mean that we find a $\propXt$ allocation of the items in $\tilde{M}$ for these agents. 

\begin{theorem}
In every instance involving 5 agents with additive valuations there always exists a $\propXt$ allocation.
\end{theorem}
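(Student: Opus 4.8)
The plan is to mirror the four-agent argument, now peeling off a single agent (say agent $1$) against a recursively defined family of $\spx$ bundles and then routing the remaining four agents into groups that are handled by the splitting primitives for at most three agents together with the four-agent theorem already established. First I would index the agents arbitrarily and construct, from agent $1$'s perspective, the recursive bundles $S_5=\spxi{1}{5}{\items}$, $S_4=\spxi{1}{4}{\items\setminus S_5}$, $S_3=\spxi{1}{3}{\items\setminus(S_5\cup S_4)}$, $S_2=\spxi{1}{2}{\items\setminus(S_5\cup S_4\cup S_3)}$, and $S_1=\items\setminus(S_5\cup S_4\cup S_3\cup S_2)$. The telescoping computation in the proof of Theorem~\ref{thm:recSpx} (with $n=5$) supplies the value bounds I will rely on, in particular $v_1(S_2\cup S_1)\ge 2/5$ and $v_1(S_3\cup S_2\cup S_1)\ge 3/5$, while $v_1(S_1)\ge 1/5$ and $v_1(S_5)\le 1/5$.

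Agent $1$ has several independent routes to satisfaction: she is $\propXt$ satisfied whenever she receives $S_5$ (Observation~\ref{obs:spxSatisfies}) or $S_1$ (she is then outright proportional); whenever she is placed in a group of size $g$ that jointly receives a metabundle $\tilde M$ with $v_1(\tilde M)\ge g/5$ that is split $\propXt$ (Observation~\ref{obs:disjointAlloc}); and whenever she is handed a single middle bundle $S_\ell$ provided the head/tail separation of Theorem~\ref{thm:recSpx} is respected. Because a $\propXt$ split always exists for any group of at most three agents, the whole task reduces to routing agents $2,3,4,5$ into groups whose metabundles they value highly enough for Observation~\ref{obs:disjointAlloc}, while simultaneously seating agent $1$ via one of the routes above.

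Set the head metabundle $U=S_5\cup S_4\cup S_3$ and the tail metabundle $L=S_2\cup S_1=\items\setminus U$, and let $s$ be the number of agents in $\{2,3,4,5\}$ with $v_i(U)\ge 3/5$ (equivalently $v_i(L)\le 2/5$). The generic resolution is a $(3,2)$ split: place three agents with $v_i(U)\ge 3/5$ into a group that splits $U$, and seat agent $1$ together with the remaining agent into a group that splits $L$. By Observation~\ref{obs:disjointAlloc} every agent in the $U$-group is satisfied (each values $U$ at least $3/5$), agent $1$ is satisfied because $v_1(L)\ge 2/5$, and the fourth agent is satisfied provided $v_i(L)\ge 2/5$; this last condition holds precisely when that agent is the one with $v_i(U)<3/5$. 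Thus the generic split succeeds whenever $s=3$, and more generally whenever we can earmark three agents valuing $U$ at least $3/5$ and one agent valuing $L$ at least $2/5$.

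The main obstacle is the two boundary regimes in which this earmarking fails. When $s=4$ (all four other agents prefer the head $U$) no agent is available to pair with agent $1$ on $L$, and when $s\le 2$ (at least two strictly prefer the tail $L$) too few agents value $U$ highly enough to fill the three-agent $U$-group. For these I would deploy the symmetric metabundles $T_3=S_3\cup S_2\cup S_1$ (with $v_1(T_3)\ge 3/5$) and $H_2=S_5\cup S_4$ (with $v_1(H_2)=1-v_1(T_3)\le 2/5$), seating agent $1$ instead in a three-agent group that splits $T_3$ against a two-agent group that splits $H_2$; and, failing that, I would fall back on a reduction to the four-agent theorem by handing agent $1$ a guaranteed bundle — $S_5$ when every other agent values it at most $1/5$, so that the remaining four all value $\items\setminus S_5$ at least $4/5$ (Observation~\ref{obs:disjointAlloc}), or symmetrically $S_1$ — and on Theorem~\ref{thm:recSpx} to justify handing agent $1$ a single middle bundle ($S_3$ or $S_4$) while an opponent takes the other. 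The bulk of the work, and the genuinely delicate part, is verifying that for \emph{every} value profile of agents $2,3,4,5$ at least one of these options is available, which requires sub-casing on the agents' values for $S_5$, $S_1$, and $S_3$ together with careful tie-breaking at the $2/5$ and $3/5$ thresholds.
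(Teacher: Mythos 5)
Your setup is exactly the paper's: your $S_5,S_4,S_3,S_2,S_1$ are its $D,C,B,A,E$, your bounds $v_1(S_2\cup S_1)\ge 2/5$ and $v_1(S_3\cup S_2\cup S_1)\ge 3/5$ are its Lemma~\ref{lem:metabundles5}, and your repertoire of moves --- the $(3,2)$ split of $B\cup C\cup D$ against $A\cup E$, the $(3,2)$ split of $A\cup B\cup E$ against $C\cup D$, handing agent $1$ the bundle $S_5$ or $S_1$ and reducing to the four-agent theorem, and handing agent $1$ a single middle bundle under the head/tail separation of Theorem~\ref{thm:recSpx} --- corresponds one-for-one to the allocations used in the paper's nine case lemmas. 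So the plan is the right plan.

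The gap is that you stop exactly where the proof begins. You concede that ``the bulk of the work \dots is verifying that for \emph{every} value profile of agents $2,3,4,5$ at least one of these options is available,'' and that verification is the entire content of the theorem; without it you have a list of sufficient conditions, not a proof that their union is exhaustive. Two of the regimes you defer are genuinely delicate and are not resolved by merely naming the tools. First, when all four of agents $2,\dots,5$ value $A\cup B\cup E$ at least $3/5$ (your $s=0$ corner), the paper must sub-case on whether some agent values $C$ or $D$ individually at least $1/5$: if yes, that agent takes one of $C,D$, agent $1$ takes the other, and the remaining three split $A\cup B\cup E$; if no, then every agent values $M\setminus D$ at least $4/5$ and agent $1$ takes $D$. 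Your proposal gestures at the second branch but not the first. Second, when exactly two agents value $A\cup E$ at least $2/5$, those two split $A\cup E$ and agents $1$, plus the two who prefer the head, must be routed into $B$, $C$, $D$; this requires a further sub-case on which of $B,C,D$ the two head-preferring agents value at least $1/5$, and in one branch a split of $B\cup C$ (not one of your named metabundles) between two agents while agent $1$ takes $D$. Until you enumerate the profiles by the two counts (number of agents with $v_i(A\cup E)\ge 2/5$ and number with $v_i(A\cup B\cup E)\ge 3/5$) and exhibit a working allocation in each cell, the theorem is not established.
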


\begin{proof}
We index the agents arbitrarily and begin by recursively constructing $\spx$ sets from the perspective of agent $1$.  We construct 5 bundles of items $A, B, C, D, E$ as follows:
\begin{itemize}
    \item $D = \spxi{1}{5}{\items}$
    \item $C = \spxi{1}{4}{\items \setminus D}$
    \item $B = \spxi{1}{3}{\items \setminus (C \cup D)}$
    \item $A = \spxi{1}{2}{\items \setminus (B \cup C \cup D)}$
    \item $E = \spxi{1}{1}{M \setminus (A \cup B \cup C \cup D)} = M \setminus (A \cup B \cup C \cup D)$
\end{itemize}

By Observation \ref{obs:spxSatisfies}, we know that if agent $1$ is allocated bundle $D$, she satisfies $\propXt$.  However, we can also observe that she would be satisfied if she is allocated bundle $E$ because  $v_1(E) \geq 1/5$ (which follows by the repeated application of the definition of $\spx$ sets as in Theorem \ref{thm:recSpx}).

We next want to find bounds on the total value of items in some bundles for agent $1$.  This will allow us to recursively divide the problem into instances with a smaller number of agents.

\begin{lemma}\label{lem:metabundles5}
With agent $1$ and sets $A,B,C,D,E$ as defined above, $v_1(A \cup E) \geq 2/5$ and $v_1(A\cup B \cup E) \geq 3/5$.
\end{lemma}

\begin{proof}
By the definition of an $\spx$ set, we have initial upper bounds on the total value agent $1$ has for the generated sets.
\begin{itemize}
    \item $v_1(D) \leq 1/5$
    \item $v_1(C) \leq 1/4(1 - v_1(D))$
    \item $v_1(B) \leq 1/3(1 - v_1(D) - v_1(C))$
    \item $v_1(A) \leq 1/2(1 - v_1(D) - v_1(C) - v_1(B))$
\end{itemize}

By combining these upper bounds, we may obtain lower bounds on $v_1(A \cup E)$ as follows 
\begin{align*}
    v_1(A \cup E) &= 1 - (v_1(B) + v_1(C) + v_1(D))\\
    &\geq 1 - (1/3 + 2/3(v_1(C) + v_1(D)))\\
    & \geq 1 - (1/3 + 1/6 + 1/2v_1(D))\\
    & \geq 1 - (1/2 + 1/10)\\
    & \geq 2/5.
\end{align*}
Similarly, we can lower bound $v_1(A \cup B \cup E)$ as
\begin{align*}
    v_1(A \cup B \cup E) &= 1 - (v_1(C) + v_1(D))\\
    & \geq 1 - (1/4 + 3/4v_1(D))\\
    & \geq 1 - (1/4 + 3/20)\\
    & \geq 3/5 \qedhere
\end{align*}
\end{proof}

With Lemma \ref{lem:metabundles5} in hand, we proceed with case analysis on the value that the other agents have for $A \cup E$ and $A \cup B \cup E$.  We present each case as a separate lemma for ease of presentation.

\begin{lemma}
If all four agents $\{2,3,4,5\}$ have value weakly greater than $3/5$ for the items in $A \cup B \cup E$ we can construct an allocation satisfying $\propXt$.
\end{lemma}
\begin{proof}
Suppose that at least one of the agents $i \in \{2,3,4,5\}$ has $v_i(C) \geq 1/5$ or $v_i(D) \geq 1/5$.  Without loss of generality, let this be agent $2$.  Then, we may give agent $2$ either $C$ or $D$, respectively and $2$ is satisfied.  We can give the other of these two sets to agent $1$ and then then find a $\propXt$ allocation of $A \cup B \cup E$ for agents $\{3,4,5\}$.  By Observation \ref{obs:disjointAlloc} and the assumption that agents $3$, $4$, and $5$ have value at least $3/5$ for $A \cup B \cup E$, we know that they will also be satisfied.  Finally, since we have only repartitioned $A \cup B \cup E$, we know by Theorem \ref{thm:recSpx} that agent $1$ is also satisfied.

Now suppose that all of the agents $i \in \{2,3,4,5\}$ have value $v_i(C) < 1/5$ and $v_i(D) < 1/5$.  Then, by Theorem \ref{thm:recSpx}, we know that we may give $D$ to agent $1$ and reallocate $A \cup B \cup C \cup E$ to the remaining agents and satisfy agent $1$.  But since all four remaining agents have value at least $4/5$ for $A \cup B \cup C \cup E$, by Observation \ref{obs:disjointAlloc} we can then find an allocation $\propXt$ satisfying these agents as well.
\end{proof}

\begin{lemma}
If exactly three of the agents in $\{2,3,4,5\}$ have value weakly greater than $3/5$ for the items in $A \cup B \cup E$ we can construct a $\propXt$ allocation.
\end{lemma}
\begin{proof}
Without loss of generality suppose agent $2$ is the agent who has value $v_2(A \cup B \cup E) < 3/5$.  We can then give agent $2$ her preferred bundle among $C$ and $D$ and agent $1$ the other bundle.  Agent $2$ must be satisfied since she receives value at least $1/5$ and agent $1$ is satisfied regardless of how the items in $A \cup B \cup E$ are distributed by Theorem \ref{thm:recSpx}.  But then, since all $i \in \{3,4,5\}$ have $v_i(A \cup B \cup E) \geq 3/5$ we can split $A \cup B \cup E$ between these agents to obtain a $\propXt$ allocation by Observation \ref{obs:disjointAlloc}.
\end{proof}

\begin{lemma}
If exactly two of the agents in $\{2,3,4,5\}$ have value weakly greater than $3/5$ for the items in $A \cup B \cup E$ we can construct a $\propXt$ allocation.
\end{lemma}
\begin{proof}
Without loss of generality, let agents $4$ and $5$ be the agents with value weakly greater than $3/5$ for the items in $A \cup B \cup E$. By Lemma \ref{lem:metabundles5} we know that agent $1$ also has value greater than $3/5$ for $A \cup B \cup E$.  Further, we know then that agents $2$ and $3$ each have value greater than $2/5$ for the items in $C \cup D$.  By Observation \ref{obs:disjointAlloc}, we can then find a $\propXt$ allocation of these items by splitting $A \cup B \cup E$ between agents $1$, $4$, and $5$ and splitting $C \cup D$ between agents $2$ and $3$.
\end{proof}

We now move to consider the number of agents that have value greater than $2/5$ for $A \cup E$. The case where at most one agent has value at least $3/5$ of $A\cup B\cup E$ is captured in the following lemmas. 

\begin{lemma}
If exactly two of the agents in $\{2,3,4,5\}$ have value weakly greater than $2/5$ for the items in $A \cup E$ we can construct a $\propXt$ allocation.
\end{lemma}
\begin{proof}
Without loss of generality let agents $4$ and $5$ have value weakly greater than $2/5$ for the items in $A \cup E$.  We let these two agents split $A \cup E$ and move to allocate the remaining bundles among agents $1$, $2$, and $3$.  We perform a small case analysis on the number of bundles that agent $2$ or agent $3$ values greater than $1/5$.  

Suppose that agents $2$ and $3$ collectively value at least two distinct bundles in $\{B, C, D\}$ greater than or equal to $1/5$ (i.e., they both value exactly one bundle more than $1/5$ but these bundles are distinct or at least one of the two agents values more than one bundle more than $1/5$).  Then, we may give both of these agents a bundle which they value at least $1/5$ and agent $1$ the remaining bundle to arrive at a $\propXt$ allocation by Observation \ref{obs:disjointAlloc} and Theorem \ref{thm:recSpx}.

Now suppose that agents $2$ and $3$ collectively value exactly one bundle in $\{B, C, D\}$ at least $1/5$.  If this bundle is $B$ or $C$, we know that $v_2(B \cup C) \geq 2/5$ and $v_3(B \cup C) \geq 2/5$ (since $v_2(D) < 1/5$ and $v_3(D) < 1/5$).  We can then allocate $D$ to agent $1$ and split $B \cup C$ between agents $2$ and $3$ to arrive at a $\propXt$ allocation.  If the bundle that $2$ and $3$ value more than $1/5$ is $D$ then we know that $v_2(C \cup D) \geq 2/5$ and $v_3(C \cup D) \geq 2/5$ so we may allocate $B$ to agent $1$ and split $C \cup D$ between agents $2$ and $3$ to arrive at a $\propXt$ allocation by Observation \ref{obs:disjointAlloc} and Theorem \ref{thm:recSpx}.
\end{proof}

\begin{lemma}
If exactly one agent in $\{2,3,4,5\}$ has value weakly greater than $2/5$ for the items in $A \cup E$ we can construct a $\propXt$ allocation.
\end{lemma}
\begin{proof}
Without loss of generality, let agent $5$ have value weakly greater than $2/5$ for the items in $A \cup E$.  By Lemma \ref{lem:metabundles5}, we know that agent $1$ also has value at least $2/5$ for these items, and by assumption it must be that agents $2$, $3$, and $4$ have value at least $3/5$ for the items in $B \cup C \cup D$.  But then, by Observation \ref{obs:disjointAlloc}, we can find a $\propXt$ allocation for all the items by reallocating items in $A \cup E$ to agents $1$ and $5$ and reallocating items in $B \cup C \cup D$ to agents $2$, $3$, and $4$.
\end{proof}

\begin{lemma}
If no agents in $\{2,3,4,5\}$ have value weakly greater than $2/5$ for the items in $A \cup E$ we can construct a $\propXt$ allocation.
\end{lemma}
\begin{proof}
If this is the case, then it must be that all four of these agents have value more than $3/5$ for items in $B \cup C \cup D$.  If none of these agents have value more than $1/5$ for $E$, then we can allocate $E$ to agent $1$ and allocate $A \cup B \cup C \cup D$ to agents $2$, $3$, $4$, and $5$ to arrive at a $\propXt$ allocation.  Suppose, on the other hand, that at least one of these agents, say agent $2$, has $v_2(E) \geq 1/5$, we can allocate $E$ to agent $2$, $A$ to agent $1$ and repartition $B \cup C \cup D$ to agents $3$, $4$, and $5$ to find an allocation that remains $\propXt$ for all agents by Observation \ref{obs:disjointAlloc} and Theorem \ref{thm:recSpx}.
\end{proof}

We now proceed to analyze the four remaining cases which are more elaborate.  
\begin{lemma}
If all four of the agents in $\{2, 3, 4, 5\}$ have value weakly greater than $2/5$ for $A \cup E$ and value less than $3/5$ for $A \cup B \cup E$ we can construct a $\propXt$ allocation.   
\end{lemma}
\begin{proof}
Observe that by assumption all agents $2$, $3$, $4$, and $5$ have value weakly greater than $2/5$ for $C \cup D$.  We can then allocate $B$ to agent $1$ and split $A \cup E$ among agents $2$ and $3$ and $C \cup D$ among $4$ and $5$.  Note that agent $1$ is satisfied by Theorem \ref{thm:recSpx} and since agents $2$ and $3$ split value at least $2/5$ and agents $4$ and $5$ split value at least $2/5$ by Observation \ref{obs:disjointAlloc} we construct a $\propXt$ allocation.
\end{proof}

We then immediately resolve the case when agents $2$, $3$, $4$, and $5$ all have value weakly greater than $2/5$ for $A \cup E$ and exactly one agent, (without loss of generality) say agent $2$,  has value greater than $3/5$ for $A \cup B \cup E$ by following the same allocation described in the previous lemma.
\begin{lemma}
If all four of the agents in $\{2, 3, 4, 5\}$ have value weakly greater than $2/5$ for $A \cup E$ and exactly one of these agents has value weakly greater than $3/5$ for $A \cup B \cup E$ we can construct a $\propXt$ allocation.  
\end{lemma}

The final two cases we examine occur when all but one agent have value at least $2/5$ for $A \cup E$.

\begin{lemma}
If exactly three of the agents in $\{2, 3, 4, 5\}$ have value weakly greater than $2/5$ for $A \cup E$ and all of these agents have value less than $3/5$ for $A \cup B \cup E$ we can construct a $\propXt$ allocation.  
\end{lemma}
\begin{proof}
Without loss of generality, suppose that $v_5(A \cup E) < 2/5$.  Since the remaining agents $i \in \{2,3,4\}$ have $v_i(A \cup E) \geq 2/5$, we can split the set $A \cup E$ between agents $2$ and $3$ and they will be satisfied by Observation \ref{obs:disjointAlloc}.  Moreover, since we have that $v_4(C \cup D) \geq 2/5$ and $v_5(C \cup D) \geq 2/5$ we can split the set $C \cup D$ between agents $4$ and $5$ and they will be satisfied by Observation \ref{obs:disjointAlloc}.  Finally, by assigning $B$ to agent $1$ we construct a $\propXt$ allocation by Theorem \ref{thm:recSpx}.
\end{proof}

\begin{lemma}
If exactly three of the agents in $\{2,3,4,5\}$ have value weakly greater than $2/5$ for $A \cup E$ and exactly one of these agents has value weakly greater than $3/5$ for $A \cup B \cup E$ we can construct a $\propXt$ allocation.
\end{lemma}
\begin{proof}
First suppose that the agent with value less than $2/5$ for $A \cup E$ is the agent with value weakly greater than $3/5$ for $A \cup B \cup E$.  Without loss of generality, let this be agent $5$.  By additivity, it must be that $v_5(B) > 1/5$ so agent $5$ is satisfied by bundle $B$.  We have that $v_3(C \cup D) \geq 2/5$ and $v_4(C \cup D) \geq 2/5$ so we can split the set $C \cup D$ between these agents and they will be satisfied by Observation \ref{obs:disjointAlloc}.  Finally, we know that $v_1(A \cup E) \geq 2/5$ by Lemma \ref{lem:metabundles5} and $v_2(A \cup E) \geq 2/5$ so we may split the set $A \cup E$ between these agents to complete the $\propXt$ allocation by Observation \ref{obs:disjointAlloc}.

On the other hand, suppose that the agent with value less than $2/5$ for $A \cup E$ is not the agent with value weakly greater than $3/5$ for $A \cup B \cup E$.  Without loss of generality, suppose $v_4(A \cup E) < 2/5$ and $v_5(A \cup B \cup E) \geq 3/5$.  We know that $v_3(C \cup D) \geq 2/5$ and $v_4(C \cup D) \geq 2/5$ so we again can split this set between agents $3$ and $4$ and they will be satisfied by Observation \ref{obs:disjointAlloc}.  Since $v_2(A \cup E) \geq 2/5$ and $v_5(A \cup E) \geq 2/5$ we can split $A \cup E$ between $2$ and $5$ and satisfy both by Observation \ref{obs:disjointAlloc}.  Finally, we can give $B$ to agent $1$ to produce a $\propXt$ allocation by Theorem \ref{thm:recSpx}.
\end{proof}

Since in each case, we have demonstrated how one may construct a $\propXt$ allocation, for any set of five agents with additive valuations, a $\propXt$ allocation exists.
\end{proof}

\section{Extensions and Average EFx}

According to our definition, an allocation $X$ is $\propXt$, if for every agent $i$ we have $v_i(X_i) + \maxmini \geq 1/n$, where $\maxmini = \max_{k \neq i}\{\minbundlei{X_k}\}$ is that agent's value for her maximin good in $X$. On the other hand, an allocation $X$ is EFx if for every \emph{pair} of agents $i,k\in N$ we have $v_i(X_i) +m_i(X_k) \geq v_i(X_k)$, where $m_i(X_k)$ is the smallest value of agent $i$ for an item in $X_k$. It is easy to verify that EFx is a stricly more demanding property than $\propXt$. In this section, we propose a middle-ground property between these two extremes, \emph{average-EFx} (a-EFx), which we find to be of interest, and posing a demanding open problem. 

Given some agent $i$, summing up over all $k\in N\setminus\{i\}$ the inequalities that EFx requires for agent $i$, we get:

\begin{align}
\sum_{k\in N\setminus\{i\}} \left(v_i(X_i) +m_i(X_k) \right) &\geq \sum_{k\in N\setminus\{i\}}v_i(X_k)    &\Rightarrow\nonumber\\
    (n-1) v_i(X_i)+\sum_{k\in N\setminus\{i\}} m_i(X_k)  &\geq 1-v_i(X_i) &\Rightarrow \nonumber\\
n v_i(X_i)+\sum_{k\in N\setminus\{i\}} m_i(X_k)  &\geq 1 &\Rightarrow \nonumber\\
v_i(X_i)+\frac{1}{n}\sum_{k\in N\setminus\{i\}} m_i(X_k) &\geq \frac{1}{n}. \label{eq:a-EFx}
\end{align}


We say that an allocation $X$ satisfies a-EFx if Inequality~\eqref{eq:a-EFx} is satisfied for every agent $i\in N$. Clearly, the argument above verifies that EFx implies a-EFx, but the inverse is not true. Specifically, for an agent $i$ to satisfy EFx she needs to not envy any other agent $k$ more than $m_i(X_k)$. On the other hand, agent $i$ could still satisfy a-EFx if she envies some agent $k$ more than $m_i(X_k)$, as long as this extra envy ``vanishes'' after averaging over all agents $k\neq i$, i.e., it satisfies EFx ``on average'', hence the name. Also, note that
\[d_i(X)=\max_{k\in N\setminus\{i\}}\{\minbundlei{X_k}\} \geq \frac{1}{n}\sum_{k\in N\setminus\{i\}} \minbundlei{X_k},\]
so a-EFx implies $\propXt$.
We believe that an interesting open problem is to study the existence of a-EFx allocations in instances with more than 3 agents. Since the $\propXt$ notion is a relaxation of a-EFx, and a-EFx is a relaxation of EFx, this provides an interesting path toward the exciting open problem of whether EFx solutions always exist for instances with 4 or more agents.

\section{Conclusion}\label{sec:conclusion}
Our work defines a new notion of approximate proportionality called $\propXt$. In contrast to similar notions of fairness such as PROPx and MMS, we show that $\propXt$ does exist in the cases of four and five agents with additive valuations. After constructing particular subsets of items for an arbitrary agent (i.e., the close-to-proportional sets), we are able to carefully assign these subsets to agents, or unions of these subsets to a group of agents, and recursively construct $\propXt$ allocations. We conjecture that the existence of $\propXt$ allocations is guaranteed even for instances with more than five agents. The main barrier toward extending our results to these instances seems to be the increasingly complex casework that arises from our approach as the number of agents increases. 

Although we prove the existence of $\propXt$ allocations using a constructive proof, the worst-case running time of our proposed algorithm is not polynomial. In particular, finding a $\spx$ set is at least as hard as subset sum (as one needs to check if some subset gives an agent exactly proportional value), a known NP-hard problem \cite{K1972},  so our approach does not provide an efficient way to calculate a $\propXt$ allocation.  Finding a polynomial time algorithm producing a $\propXt$ allocation for any number of items (and any number of agents) via an alternative method is an interesting possible avenue of future research. Another question we do not explore in this work is achieving $\propXt$ and Pareto efficiency simultaneously.  \citet{AMS2020} provide an algorithm that simultaneously achieves Pareto optimality and PROP1, so an analogous result combining $\propXt$ and Pareto optimality (or proof that no such allocation exists) would nicely complement both their work and ours.

\bibliographystyle{plainnat}
\bibliography{PropXBib}

\begin{thebibliography}{20}
\providecommand{\natexlab}[1]{#1}
\providecommand{\url}[1]{\texttt{#1}}
\expandafter\ifx\csname urlstyle\endcsname\relax
  \providecommand{\doi}[1]{doi: #1}\else
  \providecommand{\doi}{doi: \begingroup \urlstyle{rm}\Url}\fi

\bibitem[Amanatidis et~al.(2017)Amanatidis, Markakis, Nikzad, and
  Saberi]{AMNS2017}
Georgios Amanatidis, Evangelos Markakis, Afshin Nikzad, and Amin Saberi.
\newblock Approximation algorithms for computing maximin share allocations.
\newblock \emph{ACM Transactions on Algorithms (TALG)}, 13\penalty0
  (4):\penalty0 1--28, 2017.

\bibitem[Aziz et~al.(2019)Aziz, Caragiannis, Igarashi, and Walsh]{ACIW19}
Haris Aziz, Ioannis Caragiannis, Ayumi Igarashi, and Toby Walsh.
\newblock Fair allocation of indivisible goods and chores.
\newblock In Sarit Kraus, editor, \emph{Proceedings of the Twenty-Eighth
  International Joint Conference on Artificial Intelligence, {IJCAI} 2019,
  Macao, China, August 10-16, 2019}, pages 53--59. ijcai.org, 2019.

\bibitem[Aziz et~al.(2020)Aziz, Moulin, and Sandomirskiy]{AMS2020}
Haris Aziz, Herv{\'e} Moulin, and Fedor Sandomirskiy.
\newblock A polynomial-time algorithm for computing a pareto optimal and almost
  proportional allocation.
\newblock \emph{Operations Research Letters}, 2020.

\bibitem[Barman and Krishnamurthy(2020)]{BarmanM17}
Siddharth Barman and Sanath~Kumar Krishnamurthy.
\newblock Approximation algorithms for maximin fair division.
\newblock \emph{{ACM} Trans. Economics and Comput.}, 8\penalty0 (1):\penalty0
  5:1--5:28, 2020.
\newblock \doi{10.1145/3381525}.
\newblock URL \url{https://doi.org/10.1145/3381525}.

\bibitem[Barman et~al.(2018)Barman, Krishnamurthy, and Vaish]{BarmanKV18}
Siddharth Barman, Sanath~Kumar Krishnamurthy, and Rohit Vaish.
\newblock Finding fair and efficient allocations.
\newblock In \emph{Proceedings of the 2018 {ACM} Conference on Economics and
  Computation, Ithaca, NY, USA, June 18-22, 2018}, pages 557--574, 2018.
\newblock URL \url{https://doi.org/10.1145/3219166.3219176}.

\bibitem[Budish(2010)]{Budish10}
Eric Budish.
\newblock The combinatorial assignment problem: approximate competitive
  equilibrium from equal incomes.
\newblock In \emph{Proceedings of the Behavioral and Quantitative Game Theory -
  Conference on Future Directions, {BQGT} '10, Newport Beach, California, USA,
  May 14-16, 2010}, page 74:1, 2010.
\newblock URL \url{https://doi.org/10.1145/1807406.1807480}.

\bibitem[Caragiannis et~al.(2019)Caragiannis, Kurokawa, Moulin, Procaccia,
  Shah, and Wang]{CKMPS19}
Ioannis Caragiannis, David Kurokawa, Herv{\'{e}} Moulin, Ariel~D. Procaccia,
  Nisarg Shah, and Junxing Wang.
\newblock The unreasonable fairness of maximum nash welfare.
\newblock \emph{{ACM} Trans. Economics and Comput.}, 7\penalty0 (3):\penalty0
  12:1--12:32, 2019.

\bibitem[Chaudhury et~al.(2020)Chaudhury, Garg, and Mehlhorn]{CGM2020}
Bhaskar~Ray Chaudhury, Jugal Garg, and Kurt Mehlhorn.
\newblock Efx exists for three agents.
\newblock In \emph{Proceedings of the 21st ACM Conference on Economics and
  Computation}, EC '20, page 1–19, New York, NY, USA, 2020. Association for
  Computing Machinery.
\newblock ISBN 9781450379755.

\bibitem[Conitzer et~al.(2017)Conitzer, Freeman, and Shah]{CFS17}
Vincent Conitzer, Rupert Freeman, and Nisarg Shah.
\newblock Fair public decision making.
\newblock In Constantinos Daskalakis, Moshe Babaioff, and Herv{\'{e}} Moulin,
  editors, \emph{Proceedings of the 2017 {ACM} Conference on Economics and
  Computation, {EC} '17, Cambridge, MA, USA, June 26-30, 2017}, pages 629--646.
  {ACM}, 2017.

\bibitem[Freeman and Shah(2019)]{FSsurvey}
Rupert Freeman and Nisarg Shah.
\newblock Ec 2019 tutorial on recent advances in fair resource allocation.
\newblock
  \url{https://users.cs.duke.edu/~rupert/fair-division-ec19/index.html}, 2019.
\newblock Accessed: 2020-9-5.

\bibitem[Garg and Taki(2020)]{Garg2020}
Jugal Garg and Setareh Taki.
\newblock An improved approximation algorithm for maximin shares.
\newblock In P{\'{e}}ter Bir{\'{o}}, Jason Hartline, Michael Ostrovsky, and
  Ariel~D. Procaccia, editors, \emph{{EC} '20: The 21st {ACM} Conference on
  Economics and Computation, Virtual Event, Hungary, July 13-17, 2020}, pages
  379--380. {ACM}, 2020.

\bibitem[Garg et~al.(2018)Garg, McGlaughlin, and Taki]{GMT2018}
Jugal Garg, Peter McGlaughlin, and Setareh Taki.
\newblock Approximating maximin share allocations.
\newblock In \emph{2nd Symposium on Simplicity in Algorithms (SOSA 2019)}.
  Schloss Dagstuhl-Leibniz-Zentrum fuer Informatik, 2018.

\bibitem[Ghodsi et~al.(2018)Ghodsi, Hajiaghayi, Seddighin, Seddighin, and
  Yami]{GhodsiHSSY18}
Mohammad Ghodsi, Mohammad~Taghi Hajiaghayi, Masoud Seddighin, Saeed Seddighin,
  and Hadi Yami.
\newblock Fair allocation of indivisible goods: Improvements and
  generalizations.
\newblock In {\'{E}}va Tardos, Edith Elkind, and Rakesh Vohra, editors,
  \emph{Proceedings of the 2018 {ACM} Conference on Economics and Computation,
  Ithaca, NY, USA, June 18-22, 2018}, pages 539--556. {ACM}, 2018.

\bibitem[Karp(1972)]{K1972}
Richard~M Karp.
\newblock Reducibility among combinatorial problems.
\newblock In \emph{Complexity of computer computations}, pages 85--103.
  Springer, 1972.

\bibitem[Kurokawa et~al.(2018)Kurokawa, Procaccia, and Wang]{KPW2018}
David Kurokawa, Ariel~D Procaccia, and Junxing Wang.
\newblock Fair enough: Guaranteeing approximate maximin shares.
\newblock \emph{Journal of the ACM (JACM)}, 65\penalty0 (2):\penalty0 1--27,
  2018.

\bibitem[Lee(2017)]{Lee17}
Euiwoong Lee.
\newblock Apx-hardness of maximizing nash social welfare with indivisible
  items.
\newblock \emph{Inf. Process. Lett.}, 122:\penalty0 17--20, 2017.

\bibitem[Lipton et~al.(2004)Lipton, Markakis, Mossel, and Saberi]{Lipton}
Richard~J. Lipton, Evangelos Markakis, Elchanan Mossel, and Amin Saberi.
\newblock On approximately fair allocations of indivisible goods.
\newblock In \emph{Proceedings 5th {ACM} Conference on Electronic Commerce
  (EC-2004), New York, NY, USA, May 17-20, 2004}, pages 125--131, 2004.
\newblock URL \url{https://doi.org/10.1145/988772.988792}.

\bibitem[Moulin(2019)]{Moulin2019}
Herv{\'e} Moulin.
\newblock Fair division in the internet age.
\newblock \emph{Annual Review of Economics}, 11:\penalty0 407--441, 2019.

\bibitem[Plaut and Roughgarden(2018)]{Plautt2018}
Benjamin Plaut and Tim Roughgarden.
\newblock Almost envy-freeness with general valuations.
\newblock In \emph{Proceedings of the Twenty-Ninth Annual {ACM-SIAM} Symposium
  on Discrete Algorithms, {SODA} 2018, New Orleans, LA, USA, January 7-10,
  2018}, pages 2584--2603, 2018.
\newblock URL \url{https://doi.org/10.1137/1.9781611975031.165}.

\bibitem[Steinhaus(1948)]{Stern}
Hugo Steinhaus.
\newblock The problem of fair division.
\newblock \emph{Econometrica}, 16\penalty0 (1):\penalty0 101--104, 1948.

\end{thebibliography}
\newpage
\appendix

\section{PROPm allocations for the 2 and 3 agent cases}\label{sec:3agent}
For completion we include a brief proof of the existence of $\propXt$ allocations in the case of two and three agents.

We note that the proof of the existence of $\propXt$ allocations for two agents is essentially the same as the proof of Theorem 4.3 in \citet{Plautt2018} showing the existence of EFx allocations for two agents.  They employ a ``cut-and-choose'' approach where one agent partitions bundles according to the leximin++ solution and the other agent selects the preferred bundle.  We use the technique except ask the dividing agent $i$ to split the items based on our definition of $\spx$ bundles. 

\begin{theorem}
For $2$ agents with additive valuations one can always find a $\propXt$ allocation.
\end{theorem}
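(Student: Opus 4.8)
The plan is to adapt the cut-and-choose idea to the $\propXt$ setting: have agent $1$ partition the items into two close-to-proportional bundles, and then let agent $2$ select her preferred one. First I would construct, from agent $1$'s perspective, the recursive $\spx$ bundles for $n=2$, namely $S_2 = \spxi{1}{2}{\items}$ and $S_1 = \items \setminus S_2 = \spxi{1}{1}{\items \setminus S_2}$. These two sets form a partition of $\items$.

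The key preliminary observation is that agent $1$ is $\propXt$ satisfied no matter which of the two bundles she ends up with. If she receives $S_2$, this is immediate from Observation \ref{obs:spxSatisfies}, since $S_2$ is her $\spxi{1}{2}{\items}$ set. If instead she receives $S_1$, then by the repeated application of the $\spx$ definition (exactly as in Theorem \ref{thm:recSpx}) we have $v_1(S_1) = 1 - v_1(S_2) \geq 1 - 1/2 = 1/2$, so she is in fact proportionally, and hence $\propXt$, satisfied. Thus agent $1$ is $\propXt$ satisfied for either outcome of the subsequent choice.

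Next I would let agent $2$ choose whichever of $S_1, S_2$ she values more and assign agent $1$ the remaining bundle. Because $\{S_1, S_2\}$ partitions $\items$ and $v_2(\items) = 1$, we have $v_2(S_1) + v_2(S_2) = 1$, so the bundle agent $2$ selects has value at least $1/2$. Hence agent $2$ is proportionally, and therefore $\propXt$, satisfied, while agent $1$ is $\propXt$ satisfied by whichever bundle she is left with, as argued above. Together these give a complete $\propXt$ allocation.

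The only place where some care is needed—and what forces the use of the $\spx$ partition rather than an arbitrary split—is precisely the requirement that the dividing agent remain satisfied under \emph{both} possible choices of the chooser. Unlike the EFx argument of \citet{Plautt2018}, no delicate leximin++ tie-breaking is required here: since $\propXt$ for two agents only asks each agent to clear the $1/2$ threshold (with the maximin slack available to the divider via Observation \ref{obs:spxSatisfies}), the straightforward $\spx$-based cut-and-choose suffices, and I do not anticipate a substantive obstacle.
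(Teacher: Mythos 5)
Your proof is correct and follows essentially the same route as the paper: agent $1$ partitions $\items$ into her $\spxi{1}{2}{\items}$ set and its complement, is $\propXt$ satisfied with either piece (by Observation \ref{obs:spxSatisfies} for the former and by $v_1$ of the complement being at least $1/2$ for the latter), and agent $2$ then picks her preferred bundle, which by additivity is worth at least $1/2$ to her. No gaps.
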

\begin{proof}
Index the two agents arbitrarily and let agent $1$ split $\items$ into two bundles $A$ and $B$ where $A$ is her $\spxi{1}{2}{\items}$ set and $B = \items \setminus A$.  By Observation \ref{obs:spxSatisfies}, if agent $1$ receives $A$ she is satisfied.  Furthermore, by definition $v_1(B) \geq 1/2$, so agent $1$ is satisfied regardless of which bundle she receives.  We can then allow agent $2$ to select her favorite bundle between $A$ and $B$ and by additivity she must obtain value at least $1/2$ (and is therefore $\propXt$ satisfied).
\end{proof}

In the construction of $\propXt$ allocations for $3$ agents when we say that a set of two agents ``split'' a bundle $\tilde{M}$, we mean that we find a $\propXt$ allocation of the items in $\tilde{M}$ for these agents.

\begin{theorem}
For $3$ agents with additive valuations one can always find a $\propXt$ allocation.
\end{theorem}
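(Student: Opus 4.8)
The plan is to mirror the constructions used for four and five agents, but the smaller number of agents keeps the case analysis short. First I would index the agents arbitrarily and recursively build close-to-proportional bundles from agent $1$'s perspective: let $B = \spxi{1}{3}{\items}$, $A = \spxi{1}{2}{\items \setminus B}$, and $C = \items \setminus (A \cup B)$. By Observation~\ref{obs:spxSatisfies} agent $1$ is $\propXt$ satisfied whenever she receives $B$, and by the repeated application of the definition of $\spx$ sets (as in Theorem~\ref{thm:recSpx}) she has $v_1(C) \geq 1/3$, so she is also satisfied whenever she receives $C$. Moreover, since $\items = A \cup B \cup C$ is a disjoint partition and $v_1(B) \leq 1/3$, a one-line analogue of Lemma~\ref{lem:metabundles4} gives the ``metabundle'' bound $v_1(A \cup C) = 1 - v_1(B) \geq 2/3$. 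These three facts play exactly the roles their counterparts play in the four- and five-agent proofs.

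With the bound in hand, I would branch on how many of agents $2$ and $3$ value $A \cup C$ at least $2/3$. If both do, give $B$ to agent $1$ (satisfied by Observation~\ref{obs:spxSatisfies}) and let agents $2$ and $3$ split $A \cup C$ using the two-agent result; since each values $A \cup C$ at least $2/3$, Observation~\ref{obs:disjointAlloc} (with $N_1 = \{2,3\}$ and $M_1 = A \cup C$) lifts this partial allocation to a global $\propXt$ one. If exactly one of them, say agent $2$, values $A \cup C$ at least $2/3$, then the other, agent $3$, has $v_3(B) = 1 - v_3(A \cup C) > 1/3$ and is proportionally satisfied by $B$; I would give $B$ to agent $3$ and then give agent $2$ whichever of $A, C$ she values at least $1/3$ (one must exist, as $v_2(A \cup C) \geq 2/3$) and agent $1$ the other. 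Agent $1$ is then safe by Theorem~\ref{thm:recSpx}: whether she receives $A$ or $C$, the ``high'' set $B$ sits alone with agent $3$, so no item of $B$ shares a bundle with an item of $C$.

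The remaining case, where neither agent $2$ nor agent $3$ values $A \cup C$ at least $2/3$ (so both value $B$ more than $1/3$), is handled by a sub-branch on $C$, exactly paralleling the ``no agent values $A \cup D$ at least $1/2$'' case of the four-agent proof. If neither agent $2$ nor agent $3$ values $C$ at least $1/3$, then each has $v_i(A \cup B) = 1 - v_i(C) > 2/3$, so I would assign $C$ to agent $1$ (safe since $v_1(C) \geq 1/3$) and let agents $2$ and $3$ split $A \cup B$, invoking Observation~\ref{obs:disjointAlloc}. Otherwise some agent, say agent $2$, has $v_2(C) \geq 1/3$; then I would give $C$ to agent $2$ (proportionally satisfied), $A$ to agent $1$, and $B$ to agent $3$ (who has $v_3(B) > 1/3$), with agent $1$ again safe by Theorem~\ref{thm:recSpx} because $B$ and $C$ end up with distinct agents.

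The only genuinely delicate point, as in the larger cases, is ensuring agent $1$ stays $\propXt$ satisfied while the remaining items are redistributed: her guarantee for the ``middle'' bundle $A$ is conditional, holding only when no agent simultaneously receives items from the ``top'' bundle $B$ and the ``remainder'' bundle $C$. I expect the main obstacle to be verifying this non-co-location condition of Theorem~\ref{thm:recSpx} in each branch, together with checking that every time two agents ``split'' a bundle the value bound needed by Observation~\ref{obs:disjointAlloc} is met; both are routine here but are precisely the interdependencies that make the analysis balloon for larger $n$.
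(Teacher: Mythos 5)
Your proposal is correct and follows essentially the same route as the paper: the identical recursive construction $B = \spxi{1}{3}{M}$, $A = \spxi{1}{2}{M\setminus B}$, $C = M\setminus(A\cup B)$, with agent $1$ protected by Observation~\ref{obs:spxSatisfies}, the bound $v_1(C)\geq 1/3$, and the non-co-location condition of Theorem~\ref{thm:recSpx}, and agents $2$ and $3$ handled via two-agent splits and Observation~\ref{obs:disjointAlloc}. The only difference is organizational: you branch on how many of agents $2,3$ value $A\cup C$ at least $2/3$ (mirroring the four-agent proof), whereas the paper counts how many distinct bundles among $A,B,C$ those agents collectively value at least $1/3$; both case analyses are exhaustive and produce the same allocations.
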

\begin{proof}
We index the agents arbitrarily and begin by recursively constructing $\spx$ sets from the perspective of agent $1$.  We construct 3 bundles of goods $A, B, C$ as follows:
\begin{itemize}
    \item $B = \spxi{1}{3}{\items}$ 
    \item $A = \spxi{1}{2}{\items \setminus B}$
    \item $C = \spxi{1}{1}{\items \setminus (A \cup B)} = M \setminus (A \cup B)$
\end{itemize}

By Theorem \ref{thm:recSpx}, we know that if agent $1$ is allocated bundle $B$ or $C$ she will be satisfied no matter how the rest is allocated (since $v_1(C) > 1/3$) and if she is allocated $A$ she will be satisfied provided that we can assign bundles $B$ and $C$ to the remaining agents without redistributing items.  We proceed via case analysis on the values agents $2$ and $3$ have for bundles $A$, $B$, and $C$.

\begin{lemma}
If agents $2$ and $3$ collectively value at least two distinct bundles among $A$, $B$, and $C$ greater than or equal to $1/3$, we can construct a $\propXt$ allocation.  
\end{lemma}
\begin{proof}
If agents $2$ and $3$ collectively value at least two distinct bundles greater than $1/3$ then we may assign both agent $2$ and agent $3$ a bundle for which they receive value at least $1/3$.  We can then assign  agent $1$ the remaining bundle to arrive at a $\propXt$ allocation.
\end{proof}

\begin{lemma}
If agents $2$ and $3$ collectively value exactly one bundle among $A$, $B$, and $C$ greater than or equal to $1/3$, we can construct a $\propXt$ allocation.
\end{lemma}
\begin{proof}
Suppose first that this is either bundle $B$ or $A$.  If so, then it must be that $v_2(B \cup A) \geq 2/3$ and $v_3(B \cup A) \geq 2/3$ (since $v_2(C) < 1/3$ and $v_3(C) < 1/3$).  But then we can split $B \cup A$ between agents $2$ and $3$ and allocate $C$ to agent $1$ to obtain a $\propXt$ allocation by Observation \ref{obs:disjointAlloc}.

On the other hand, if this bundle is $C$, then it must be that $v_2(A \cup C) \geq 2/3$ and $v_3(A \cup C) \geq 2/3$ (since $v_2(B) < 1/3$ and $v_3(B) < 1/3$).  But then we can split $A \cup C$ between agents $2$ and $3$ and allocate $B$ to agent $1$ to obtain a $\propXt$ allocation by Observation \ref{obs:disjointAlloc}.
\end{proof}

Since a $\propXt$ allocation can be found in every case, it must always exist for any three agents with additive valuations.
\end{proof}

\end{document}